\date{}
\newcommand{\tr}{\text{tr}}
\newtheorem{theorem}{Theorem}
\newtheorem{lemma}{Lemma}
\newtheorem{corollary}{Corollary}
\newenvironment{proof}{{\noindent{\bf Proof:}}}{$\hfill\Box$}
\newcommand{\ket}[1]{|#1\rangle}
\newcommand{\bra}[1]{\langle#1|}
\def\det{\text{\rm{det}}}
\def\tr{\text{\rm{tr}}}
\def\ent{\text{\rm{ent}}}
\begin{document}

\title{\huge Maximal Entanglement -- A New Measure of Entanglement}

\author{\IEEEauthorblockN{\large Salman Beigi}\\

\IEEEauthorblockA{\normalsize School of Mathematics,
Institute for Research in Fundamental Sciences (IPM),
Tehran, Iran}
}


\maketitle

\begin{abstract}
Maximal correlation is a measure of correlation for bipartite distributions. This measure has two intriguing features: (1) it is monotone under local stochastic maps; (2) it gives the same number when computed on i.i.d.\ copies of a pair of random variables. This measure of correlation has recently been generalized for bipartite quantum states, for which the same properties have been proved. In this paper, based on maximal correlation, we define a new measure of entanglement which we call maximal entanglement. We show that this measure of entanglement is faithful (is zero on separable states and positive on entangled states), is monotone under local quantum operations, and gives the same number when computed on tensor powers of a bipartite state. 

\end{abstract}

\IEEEpeerreviewmaketitle


\section{Maximal correlation}
Suppose that two parties, Alice and Bob, respectively receive $a\in \mathcal A$ and $b\in \mathcal B$ sampled from a joint distribution $p_{AB}$. Their goal is to respectively output $c\in \mathcal C$ and $d\in \mathcal D$ with some predetermined distribution $q_{CD}$. The question is whether this goal is achievable, assuming that there is no communication link between Alice and Bob. 
We call this problem the local transformation problem (transformation of a bipartite distribution to another one under local stochastic maps).
Deciding whether local transformation is possible, is a hard problem in general due to the non-linearity of the problem, especially when the size of $\mathcal A\times \mathcal B$ is large. In this case one is willing to obtain necessary or sufficient conditions for the problem.

Local stochastic maps (channels) cannot transform a less-correlated bipartite distribution, to a more-correlated one. So we may find bounds on this problem using measures of correlation. For instance, due to the data processing inequality, if the mutual information between $A$ and $B$, (denoted by $I(A; B)$), is less than $I(C; D)$ then $p_{AB}$ cannot be transformed to $q_{CD}$ under local operations.  

Suppose that Alice and Bob receive multiple samples from the source distribution $p_{AB}$, and want to generate only one sample from $q_{CD}$. That is, their inputs are $a^n\in \mathcal A^n$ and $b^n\in \mathcal B^n$, for some arbitrary $n$, sampled with probability 
$p^n(a^n, b^n):=\prod_{i} p(a_i, b_i)$, and they want to output $c$ and $d$ with probability $q(c, d)$. Mutual information in this case does not give any bound simply because $I(A^n;B^n)=nI(A;B)$ is greater than $I(C;D)$ for sufficiently large $n$ assuming that $I(A;B)$ is non-zero. In general, additive (or even ``weakly additive") measures of correlation do not give any bound for the latter problem.  

There is a measure of correlation called \emph{maximal correlation} which is first introduced by Hirschfeld~\cite{Hirschfeld} and Gebelein~\cite{Gebelein} and then studied by R\'enyi~\cite{Renyi1, Renyi2}. Later it is studied in~\cite{Witsenhausen} by Witsenhausen, and in the recent works~\cite{Kumar1, Kumar2, KamathAnantharam, KangUlukus, Anantharametal, Polyanskiy}. 
Maximal correlation is defined as follows:
\begin{align}
\mu(p_{AB})=\max\,\, & \mathbb E[fg]\label{eq:max-classic}\\
& \mathbb E[f] = \mathbb E[g] =0,\nonumber\\
& \mathbb E[f^2] = \mathbb E[g^2] =1,\nonumber
\end{align}
in which $f$ and $g$ are functions on $\mathcal A$ and $\mathcal B$ respectively. $\mu(p_{AB})$ is indeed the maximum of the expectation value of the product of two random variables that have zero mean and variance $1$. 

Maximal correlation is not additive. Indeed this measure of correlation, has the intriguing property that $\mu(p_{A^nB^n}^n)=\mu(p_{AB})$. Moreover, as a measure of correlation it satisfies the data processing inequality. Putting these together we conclude that if $\mu(p_{AB})<\mu(q_{CD})$ then local transformation of $p_{AB}$ to $p_{CD}$ is impossible even when an arbitrary large number of samples of the source $p_{AB}$ is available. 

Let us consider an example. Let $|\mathcal A|=|\mathcal B|=2$ and define the distribution $p^{(\epsilon)}_{AB}$ by 
\begin{align*}
p^{(\epsilon)}_{ab} = \begin{cases}
 \frac{1-\epsilon} 2&\quad a=b,\\
\frac \epsilon 2&\quad a\neq b.
\end{cases}
\end{align*}
Then for $0\leq \epsilon\leq 1/2$ we have $\mu\big(p_{AB}^{(\epsilon)}\big)=1-2\epsilon$. As a result, having even infinitely many samples from $p_{AB}^{(\epsilon)}$ we cannot locally generate samples from $p_{AB}^{(\delta)}$ if $0\leq \delta< \epsilon\leq 1/2$.

Maximal correlation has recently been defined for bipartite quantum states~\cite{Beigi12}. In this paper we first review the definition and properties of maximal correlation in the quantum case. Then based on maximal correlation, we define a new measure of entanglement, which we call \emph{maximal entanglement}. We show that this measure of entanglement is zero on separable states and positive on entangled ones, is monotone under local quantum operations, and gives the same number when computed on tensor powers of a bipartite state. This measure of entanglement, however, is not monotone under classical communication.

\section{Maximal correlation for quantum states}\label{sec:new}

Let us first fix some notations. The Hilbert space corresponding to a register $A$ is denoted by $\mathcal H_A$, and the space of linear operators acting on $\mathcal{H}_A$ by $\mathbf{L}(\mathcal{H}_A)$. In this paper we use Dirac's ket-bra notation for vectors: vectors of $\mathcal H_A$ are denoted by $\ket v$, and vectors in the dual space are denoted by $\bra w$; Then $\ket v\bra w$ is a linear operator and belongs to $\mathbf L(\mathcal H_A)$. The dimension of $\mathcal H_A$ is denoted by $d_A$ (in this paper we only consider finite dimensional quantum registers). The same notations are adopted for other quantum registers. 

We equip $\mathbf L(\mathcal H_A)$ with the Hilbert-Schmidt inner product $\langle X, Y\rangle:= \tr(X^\dagger Y)$ for $X, Y\in \mathbf L(\mathcal H_A)$, where $X^{\dagger}$ is the adjoint of $X$. 

Now to define maximal correlation for a bipartite density matrix $\rho_{AB}\in \mathbf L(\mathcal H_A)\otimes \mathbf L(\mathcal H_B)$, we should think of $f, g$ in \eqref{eq:max-classic} as quantum observables, whose expectations are computed with Born's rule. Based on this intuition, maximal correlation of $\rho_{AB}$ is defined by
\begin{align}
\mu(\rho_{AB})=\max\,\, & \vert \tr(\rho_{AB} X_A\otimes Y^{\dagger}_{B})\vert\label{eq:con0}\\
& \tr(\rho_AX_A) = \tr(\rho_B Y_B) =0,\label{eq:con1}\\
& \tr(\rho_A X_AX_A^{\dagger}) = \tr(\rho_B Y_BY_B^{\dagger}) =1.\label{eq:con2}
\end{align}
Here $\rho_A=\tr_B(\rho_{AB})$ and $\rho_B=\tr_A(\rho_{AB})$ are the reduced density matrices on subsystems $A$ and $B$ respectively, and $X_A\in \mathbf{L}(\mathcal{H}_A)$ and $Y_B\in \mathbf{L}(\mathcal{H}_B)$. As shown in~\cite{Beigi12} in the above optimization one may assume that $X_A$ and $Y_B$ are hermitian.

To study properties of $\mu(\rho_{AB})$ it would be useful to define
$$\widetilde \rho_{AB} = (I_A\otimes \rho_B^{-1/2}) \rho_{AB} (\rho_A^{-1/2}\otimes I_B).$$
Here the inverses of $\rho_A$ and $\rho_B$ are defined on their supports (Moore-Penrose pseudo-inverse). In fact without loss of generality, by restricting $\mathcal H_A, \mathcal H_B$ to the supports of $\rho_A, \rho_B$ respectively, we may assume that the local density matrices are full-rank.

The following theorem proved in~\cite{Beigi12} is the quantum version of the results of~\cite{Kumar2} and~\cite{KangUlukus} connecting maximal correlation to certain Schmidt coefficients (see also \cite{Polyanskiy}).

\begin{theorem} \cite{Beigi12} \label{thm:schmidt}
The first Schmidt coefficient of $\widetilde \rho_{AB}$ as a vector in the bipartite Hilbert space $\mathbf{L}(\mathcal{H}_A)\otimes \mathbf{L}(\mathcal{H}_B)$ is equal to $1$, and its second Schmidt coefficient is equal to
$\mu(\rho_{AB})$.
\end{theorem}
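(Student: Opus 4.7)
The plan is to relate the operator Schmidt decomposition of $\widetilde\rho_{AB}$, viewed as a vector in $\mathbf L(\mathcal H_A)\otimes\mathbf L(\mathcal H_B)$ with the Hilbert--Schmidt inner product, directly to the variational definition of $\mu(\rho_{AB})$ via a change of variables that absorbs the marginal ``whitening'' factors appearing in $\widetilde\rho_{AB}$. Assuming (as allowed in the paper) that $\rho_A,\rho_B$ are full rank, I would set $\alpha:=X_A^\dagger\rho_A^{1/2}$ and $\beta:=\rho_B^{1/2}Y_B$. Using $\rho_{AB}=(I_A\otimes\rho_B^{1/2})\widetilde\rho_{AB}(\rho_A^{1/2}\otimes I_B)$ together with cyclicity of the trace, the objective rewrites as
$$\tr\bigl(\rho_{AB}(X_A\otimes Y_B^\dagger)\bigr)=\langle \alpha\otimes\beta,\widetilde\rho_{AB}\rangle,$$
and a direct check shows that \eqref{eq:con2} becomes $\|\alpha\|=\|\beta\|=1$ while \eqref{eq:con1} becomes $\alpha\perp\rho_A^{1/2}$, $\beta\perp\rho_B^{1/2}$ in HS. The map $(X_A,Y_B)\leftrightarrow(\alpha,\beta)$ is clearly a bijection.

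Next I would identify a candidate top Schmidt pair as $(\rho_A^{1/2},\rho_B^{1/2})$ with coefficient $1$. Both vectors have unit HS norm since $\tr\rho_A=\tr\rho_B=1$, and a direct computation from the definition of $\widetilde\rho_{AB}$ gives
$$\langle \rho_A^{1/2}\otimes\rho_B^{1/2},\widetilde\rho_{AB}\rangle = \tr\rho_{AB}=1.$$
To confirm that this is the \emph{largest} Schmidt coefficient I would show that $|\langle\alpha\otimes\beta,\widetilde\rho_{AB}\rangle|\le 1$ for every unit $\alpha,\beta$. Translating back via $X=\rho_A^{-1/2}\alpha^\dagger$ and $Y=\rho_B^{-1/2}\beta$, this is the assertion
$$\bigl|\tr\rho_{AB}(X\otimes Y^\dagger)\bigr|^2\le \tr(\rho_A XX^\dagger)\,\tr(\rho_B YY^\dagger)=1,$$
which follows from Cauchy--Schwarz for the PSD sesquilinear form $(C,D)\mapsto\tr(\rho_{AB}C^\dagger D)$ applied to $C=X^\dagger\otimes I_B$ and $D=I_A\otimes Y^\dagger$, using the factorization $X\otimes Y^\dagger=(X\otimes I_B)(I_A\otimes Y^\dagger)$.

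Finally, with $(\rho_A^{1/2},\rho_B^{1/2})$ established as a top Schmidt pair, the standard variational characterization of the second Schmidt coefficient yields
$$\sigma_2=\max\bigl\{|\langle \alpha\otimes\beta,\widetilde\rho_{AB}\rangle| : \|\alpha\|=\|\beta\|=1,\ \alpha\perp\rho_A^{1/2},\ \beta\perp\rho_B^{1/2}\bigr\}.$$
By the bijection from the first step, this optimization is exactly the one in \eqref{eq:con0}--\eqref{eq:con2} that defines $\mu(\rho_{AB})$, giving $\sigma_2=\mu(\rho_{AB})$.

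The only step requiring real care is the Cauchy--Schwarz bound for $\sigma_1\le 1$: one must split $X\otimes Y^\dagger$ as $(X\otimes I_B)(I_A\otimes Y^\dagger)$ so that the two resulting quadratic terms collapse to the marginal quantities $\tr(\rho_A XX^\dagger)$ and $\tr(\rho_B YY^\dagger)$, matching the normalization~\eqref{eq:con2}. This alignment between the factorization of $X\otimes Y^\dagger$ and the marginals of $\rho_{AB}$ is precisely what the definition of $\widetilde\rho_{AB}$ is engineered to produce; everything else in the proof is routine linear algebra.
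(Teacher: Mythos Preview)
The paper does not actually prove this theorem; it is quoted from \cite{Beigi12} and only illustrated via the classical singular-value analogue, so there is no in-paper argument to compare against. That said, your proof is correct and is the natural one. The substitution $\alpha=X^\dagger\rho_A^{1/2}$, $\beta=\rho_B^{1/2}Y$ turns \eqref{eq:con1}--\eqref{eq:con2} into exactly the HS unit-norm and orthogonality-to-$(\rho_A^{1/2},\rho_B^{1/2})$ constraints; the Cauchy--Schwarz step with $C=X^\dagger\otimes I_B$ and $D=I_A\otimes Y^\dagger$ correctly yields $\sigma_1\le 1$, with equality witnessed by $(\rho_A^{1/2},\rho_B^{1/2})$; and since any unit product vector attaining $\sigma_1$ is a bona fide top Schmidt pair, the standard variational formula for $\sigma_2$ (maximize the product-vector overlap over unit $\alpha\perp\rho_A^{1/2}$, $\beta\perp\rho_B^{1/2}$) coincides with the definition of $\mu(\rho_{AB})$. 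The only point one might add explicitly is a one-line justification that attaining $\sigma_1$ with a product vector forces that pair to appear in some Schmidt decomposition (equality in Cauchy--Schwarz), which legitimizes the orthogonality constraint used for $\sigma_2$; otherwise the argument is complete.
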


Note that $\mathbf{L}(\mathcal{H}_A)$ and $\mathbf{L}(\mathcal{H}_B)$ are Hilbert spaces equipped with the Hilbert-Schmidt inner product. So Schmidt decomposition, and then Schmidt coefficients, of vectors in $\mathbf{L}(\mathcal{H}_A)\otimes \mathbf{L}(\mathcal{H}_B)$ are computed with respect to Hilbert-Schmidt inner product. To be more precise, the Schmidt decomposition of $\widetilde \rho_{AB}$ is of the form 
$$\widetilde \rho_{AB}=\sum_i \lambda_i M_i\otimes N_i,$$
where $\lambda_1\geq \lambda_2\geq \cdots\geq 0$ are Schmidt coefficients, and $M_i\in \mathbf L(\mathcal H_A)$ and $N_i\in \mathbf L(\mathcal H_B)$ are orthonormal bases, i.e., $\tr(M_i^{\dagger}M_j) = \tr(N_i^{\dagger}N_j)=\delta_{ij}$ where $\delta_{ij}$ is the Kronecker delta function. The above theorem states that 
$$\lambda_1=1 \text{ and  } \lambda_2=\mu(\rho_{AB}).$$

Let us examine this theorem in the classical case. Let $p_{AB}$ be a joint distribution. Let $P_{AB}$ be a $|\mathcal A|\times |\mathcal B|$-matrix whose $(a, b)$-th entry is equal to $p_{ab}$. Also let $P_A$ be the diagonal matrix with diagonal entries $p_{a}$. Define $P_B$ similarly. Define $\widetilde P_{AB} = P_A^{-1/2}P_{AB}P_B^{-1/2}$ which is the classical analogue of $\widetilde\rho_{AB}$ defined above. 
Then the first \emph{singular value} of $\widetilde P_{AB}$ is equal $1$ and its second singular value is equal to $\mu(P_{AB})$. For example, if $\mathcal A= \mathcal B=\{0,1\}$, then 
\begin{align*}
\widetilde P_{AB}=
\begin{pmatrix}
\frac{p_{00}}{\sqrt{(p_{00}+p_{01})(p_{00}+p_{10})}} & \frac{p_{01}}{\sqrt{(p_{00}+p_{01})(p_{01}+p_{11})}}\\
\frac{p_{10}}{\sqrt{(p_{10}+p_{11})(p_{00}+p_{10})}} & \frac{p_{11}}{\sqrt{(p_{10}+p_{11})(p_{01}+p_{11})}}
\end{pmatrix}.
\end{align*}
$\widetilde P_{AB}$ has two singular values one of which is $1$. So the other one is equal to 
\begin{align}\label{eq:max-det}
\mu(P_{AB}) = |\det \widetilde P_{AB}|.
\end{align}

Relating maximal correlation of $\rho_{AB}$ to Schmidt coefficients of $\widetilde \rho_{AB}$, we can now state the main properties of maximal correlation.

\begin{theorem} \cite{Beigi12}\label{thm:ab} $\mu(\cdot)$ satisfies the following properties:
\begin{enumerate}
\item[{\rm(a)}] $\mu(\rho_{AB}\otimes \sigma_{A'B'}) = \max\{ \mu(\rho_{AB}), \mu(\sigma_{A'B'})  \}$.
\item[{\rm(b)}] Let $\Phi_{A\rightarrow A'}, \Psi_{B\rightarrow B'}$ be completely positive trace-preserving super-operators. Let $\sigma_{A'B'} =\Phi\otimes \Psi(\rho_{AB})$. Then
$\mu(  \sigma_{A'B'}  )   \leq \mu(\rho_{AB})$.
\end{enumerate}
\end{theorem}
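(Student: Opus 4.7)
The plan is to prove part (a) using Theorem~\ref{thm:schmidt} together with the observation that the ``twisted'' operator $\widetilde{\rho_{AB}\otimes\sigma_{A'B'}}$ factors as a tensor product, and to prove part (b) directly from the variational definition \eqref{eq:con0}--\eqref{eq:con2} by pulling the test operators back through the Heisenberg-picture adjoints of $\Phi$ and $\Psi$ and invoking a Schwarz-type inequality for completely positive unital maps.

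For part (a), I first observe that, with respect to the bipartition $(AA'):(BB')$, the $A$-marginal of $\rho_{AB}\otimes\sigma_{A'B'}$ is $\rho_A\otimes\sigma_{A'}$ with $(\rho_A\otimes\sigma_{A'})^{-1/2}=\rho_A^{-1/2}\otimes\sigma_{A'}^{-1/2}$, and likewise on the $B$-side; distributing and reordering tensor factors then gives $\widetilde{\rho_{AB}\otimes\sigma_{A'B'}}=\widetilde\rho_{AB}\otimes\widetilde\sigma_{A'B'}$. If $\widetilde\rho_{AB}=\sum_i\lambda_i\,M_i\otimes N_i$ and $\widetilde\sigma_{A'B'}=\sum_j\mu_j\,M'_j\otimes N'_j$ are their Hilbert--Schmidt Schmidt decompositions, then
$$\widetilde\rho_{AB}\otimes\widetilde\sigma_{A'B'}=\sum_{i,j}\lambda_i\mu_j\,(M_i\otimes M'_j)\otimes(N_i\otimes N'_j),$$
and this is itself a Schmidt decomposition because $\{M_i\otimes M'_j\}$ and $\{N_i\otimes N'_j\}$ are orthonormal in $\mathbf{L}(\mathcal H_{AA'})$ and $\mathbf{L}(\mathcal H_{BB'})$ respectively. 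By Theorem~\ref{thm:schmidt}, $\lambda_1=\mu_1=1$, $\lambda_2=\mu(\rho_{AB})$, $\mu_2=\mu(\sigma_{A'B'})$, and all the coefficients lie in $[0,1]$. Thus the largest product is $\lambda_1\mu_1=1$, and the second largest is $\max\{\lambda_2,\mu_2\}=\max\{\mu(\rho_{AB}),\mu(\sigma_{A'B'})\}$, since every other $\lambda_i\mu_j$ with $(i,j)\ne(1,1)$ is dominated by one of $\lambda_2,\mu_2$. A second application of Theorem~\ref{thm:schmidt} gives (a).

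For part (b), let $X_{A'},Y_{B'}$ be Hermitian admissible test operators for $\mu(\sigma_{A'B'})$ (Hermiticity is without loss by the remark following \eqref{eq:con2}), and set $X_A=\Phi^*(X_{A'})$, $Y_B=\Psi^*(Y_{B'})$, where $\Phi^*,\Psi^*$ are the Heisenberg adjoints of $\Phi,\Psi$; these are CP and unital. The adjoint identity gives
$$\tr\bigl(\sigma_{A'B'}\,X_{A'}\otimes Y_{B'}^\dagger\bigr)=\tr\bigl(\rho_{AB}\,X_A\otimes Y_B^\dagger\bigr).$$
Unitality transfers the zero-mean constraints, namely $\tr(\rho_A X_A)=\tr(\rho_{A'}X_{A'})=0$, and similarly for $Y_B$. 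The Kadison--Schwarz inequality $\Phi^*(H)^2\le\Phi^*(H^2)$ for Hermitian $H$, applied to $X_{A'}$, yields $\tr(\rho_A X_A^2)\le\tr(\rho_{A'}X_{A'}^2)=1$, and analogously $\tr(\rho_B Y_B^2)\le 1$. Rescaling $X_A$ and $Y_B$ to unit second moment only increases the modulus of the objective, so the rescaled operators are admissible in the optimization defining $\mu(\rho_{AB})$, which shows $|\tr(\sigma_{A'B'}\,X_{A'}\otimes Y_{B'}^\dagger)|\le\mu(\rho_{AB})$. Supremizing over $(X_{A'},Y_{B'})$ proves (b).

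The main technical ingredient is the Kadison--Schwarz inequality in (b); this is the only place where the complete positivity of $\Phi$ and $\Psi$ (rather than mere positivity) is essential. The conceptual subtlety in (a) is that the conclusion depends on the leading Schmidt coefficients of $\widetilde\rho$ and $\widetilde\sigma$ being \emph{exactly} $1$, which is precisely what Theorem~\ref{thm:schmidt} guarantees; were they strictly less than $1$, the second-largest product could be a mixed term $\lambda_2\mu_2$ rather than $\max\{\lambda_2,\mu_2\}$, and the clean ``max'' formula would fail.
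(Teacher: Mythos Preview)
Your argument for (a) is exactly the paper's argument, just written out in full: the paper says only that Schmidt coefficients of a tensor product are the pairwise products of the individual Schmidt coefficients, and you have unpacked what that means for the second-largest value given that the largest on each side equals $1$ by Theorem~\ref{thm:schmidt}.

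For (b) your proof is correct but follows a genuinely different route. The paper does not use the Heisenberg adjoint at all; instead it invokes Stinespring: every CPTP map factors as an isometry followed by a partial trace, local isometries leave $\mu$ invariant, and partial trace cannot increase $\mu$ because in the optimization for the larger system one may restrict to $X_{AA'}=X_A\otimes I_{A'}$, $Y_{BB'}=Y_B\otimes I_{B'}$, which reproduces the optimization for the smaller system. Your approach avoids the Stinespring decomposition and handles an arbitrary channel in one step by pulling back test operators through $\Phi^*,\Psi^*$; the price is the operator inequality $\Phi^*(H)^2\le\Phi^*(H^2)$. Two small remarks: first, you should note the degenerate case $\tr(\rho_A X_A^2)=0$, in which $X_A=0$ (since $\rho_A$ may be taken full rank) and the objective equals $0$ before rescaling, so the bound holds trivially. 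Second, your closing comment that complete positivity is ``essential'' here overstates things: since you restrict to Hermitian $X_{A'},Y_{B'}$, what you use is Kadison's inequality, which holds for unital \emph{positive} maps, not only completely positive ones.
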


\begin{proof}
(a) This is a simple consequence of the fact that Schmidt coefficients of the tensor product of two vectors are equal to the pairwise products of Schmidt coefficients of the two vectors.

\noindent (b) Completely positive trace-preserving maps are compositions of isometries and partial traces. It is not hard to see that local isometries do not change $\mu(\rho_{AB})$. So we only need to show that $\mu(\sigma_{AB}) \leq \mu(\sigma_{AA'BB'})$. This inequality holds because in the definition of maximal correlation for $\sigma_{AA'BB'}$ we may restrict $X_{AA'}, Y_{BB'}$ to have the form $X_{AA'}=X_A\otimes I_{A'}$ and $Y_{BB'}=Y_B\otimes I_{B'}$. With these restrictions we obtain $\mu(\sigma_{AB})$ as the optimal value.

\end{proof}

The following corollary is a direct consequence of the above theorem.

\begin{corollary} \label{cor:1} Suppose that $\rho_{AB}^{\otimes n}$, for some $n$, can be locally transformed to $\sigma_{EF}$. Then
$$\mu(\rho_{AB}) \geq \mu(\sigma_{EF}).$$
\end{corollary}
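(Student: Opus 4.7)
The plan is to combine the two parts of Theorem~\ref{thm:ab} in a very direct way. First I would iterate part~(a). Taking $\sigma_{A'B'} = \rho_{AB}$ in (a) gives
$\mu(\rho_{AB}^{\otimes 2}) = \max\{\mu(\rho_{AB}), \mu(\rho_{AB})\} = \mu(\rho_{AB})$,
and a straightforward induction on $n$, writing $\rho_{AB}^{\otimes n}$ as $\rho_{AB}^{\otimes (n-1)}\otimes \rho_{AB}$ and applying (a) again, yields $\mu(\rho_{AB}^{\otimes n}) = \mu(\rho_{AB})$ for every $n\geq 1$. This step is where the ``non-additivity'' of maximal correlation is doing all the work — it is exactly the feature that makes the statement nontrivial compared to what mutual information would give.

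Next I would invoke part~(b). By assumption there exist local CPTP super-operators $\Phi$ (acting on the $A^n$ side) and $\Psi$ (acting on the $B^n$ side) such that $\sigma_{EF} = \Phi\otimes \Psi(\rho_{AB}^{\otimes n})$. Part~(b) then immediately gives
$\mu(\sigma_{EF}) \leq \mu(\rho_{AB}^{\otimes n}) = \mu(\rho_{AB})$,
which is the claim.

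I do not expect any real obstacle here: the corollary is a one-line assembly of the tensorization identity (a) and the data-processing inequality (b), both of which are already established in Theorem~\ref{thm:ab}. The only minor subtlety is making sure that the local maps $\Phi,\Psi$ are allowed to act jointly on the $n$ copies of $A$ and $n$ copies of $B$ respectively (rather than being tensor products of single-copy maps), but this is precisely the setting covered by part~(b) once $\rho_{AB}^{\otimes n}$ is viewed as a single bipartite state on the composite registers $A^n$ and $B^n$.
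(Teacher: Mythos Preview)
Your proposal is correct and matches the paper's approach: the paper simply states that the corollary is a direct consequence of Theorem~\ref{thm:ab}, and your argument spells out exactly that, iterating part~(a) to get $\mu(\rho_{AB}^{\otimes n})=\mu(\rho_{AB})$ and then applying the data-processing inequality~(b).
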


Let us workout an example. Let $\ket \psi_{AB} = \frac{1}{\sqrt{2}}(\ket{00} + \ket {11})$ be the Bell state on two qubits. Define 
\begin{align}\label{eq:epsilon-bell}
\rho^{(\epsilon)}_{AB} =(1-\epsilon)\,\ket \psi\bra\psi_{AB}+ \epsilon \frac{I_{AB}}{4},
\end{align}
where $0\leq \epsilon\leq 1$ and $I_{AB}/4$ is the maximally mixed state. It is not hard to see that $\mu\big(\rho^{(\epsilon)}_{AB}\big) = 1- \epsilon$ (see~\cite{Beigi12} for details). 
Now using the above corollary, having an arbitrary large number of copies of $\rho_{AB}^{(\epsilon)}$ one cannot transform them into one copy of $\rho_{AB}^{(\delta)}$ under local transformations if $\epsilon> \delta$.

The following theorem characterizes the extreme values of maximal correlation. 

\begin{theorem} \cite{Beigi12} \label{thm:extreme}
$0\leq \mu(\rho_{AB})\leq 1$ and the followings hold:
\begin{itemize} 
\item[\rm{(a)}] $\mu(\rho_{AB})=0$ if and only if $\rho_{AB}= \rho_A\otimes \rho_B$.
\item[\rm{(b)}] $\mu(\rho_{AB})=1$ if and only if there exist local measurements $\{M_0, M_1\}$ on $A$, and $\{N_0, N_1\}$ on $B$ such that $0<\tr\left(\rho_{AB} M_0\otimes N_0\right)<1$, and $$\tr\left(\rho_{AB} (M_0\otimes N_1)\right) = \tr\left( \rho_{AB} (M_1 \otimes N_0)  \right) =0.$$
\item[\rm{(c)}] For a pure state $\rho_{AB}=\ket \psi\bra \psi_{AB}$, we have $\mu(\rho_{AB})=0$ if $\ket \psi_{AB}$ is a product state, and $\mu(\rho_{AB})=1$ if $\ket \psi_{AB}$ is entangled.
\end{itemize}
\end{theorem}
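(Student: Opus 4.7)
The plan is to lean on Theorem~\ref{thm:schmidt} for the bound and parts (a), (c), and to combine a Cauchy-Schwarz argument with a spectral analysis for part (b). The bound $0 \leq \mu(\rho_{AB}) \leq 1$ is immediate from Theorem~\ref{thm:schmidt}: $\mu$ is the second Schmidt coefficient of $\widetilde\rho_{AB}$, hence nonnegative and no larger than the first Schmidt coefficient $\lambda_1 = 1$.

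For part (a), Theorem~\ref{thm:schmidt} makes $\mu(\rho_{AB}) = 0$ equivalent to $\widetilde\rho_{AB}$ having Schmidt rank one in $\mathbf{L}(\mathcal{H}_A)\otimes \mathbf{L}(\mathcal{H}_B)$. A short calculation shows that the HS-unit vectors $\rho_A^{1/2}$ and $\rho_B^{1/2}$ achieve $\tr\bigl(\widetilde\rho_{AB}(\rho_A^{1/2}\otimes \rho_B^{1/2})\bigr) = \tr(\rho_{AB}) = 1$, so by the equality case of Cauchy-Schwarz in $\mathbf{L}(\mathcal{H}_A)\otimes \mathbf{L}(\mathcal{H}_B)$ they must be the first Schmidt components. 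Hence $\widetilde\rho_{AB} = \rho_A^{1/2}\otimes \rho_B^{1/2}$, and undoing the similarity gives $\rho_{AB} = \rho_A\otimes \rho_B$. The converse is a direct substitution in \eqref{eq:con0}--\eqref{eq:con2}.

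Part (c) reduces to (a) when $\ket\psi_{AB}$ is a product. For an entangled pure state, write $\ket\psi_{AB} = \sum_i \sqrt{p_i}\ket{ii}$ with $r \geq 2$ positive Schmidt coefficients; a direct computation gives $\widetilde\rho_{AB} = \sum_{i,j} \ket{i}\bra{j}_A \otimes \ket{i}\bra{j}_B$. The matrix units $\ket{i}\bra{j}$ are orthonormal in the Hilbert-Schmidt inner product, so this is already a Schmidt decomposition with all Schmidt coefficients equal to $1$; in particular $\mu(\rho_{AB}) = \lambda_2 = 1$.

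Part (b) is the main work. The easy direction ($\Leftarrow$) is a direct construction: given the measurements, set $p = \tr(\rho_{AB}(M_0 \otimes N_0))$ (the cross-term conditions force $\tr(\rho_A M_0) = \tr(\rho_B N_0) = p$) and take $X_A = \sqrt{(1-p)/p}\, M_0 - \sqrt{p/(1-p)}\, M_1$, with $Y_B$ defined analogously; a short calculation verifies the mean-zero, unit-variance constraints and evaluates $\tr(\rho_{AB}(X_A \otimes Y_B)) = 1$. The harder direction ($\Rightarrow$) is the main obstacle. Assuming $\mu(\rho_{AB}) = 1$, let $X_A, Y_B$ be hermitian optimizers. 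Applying Cauchy-Schwarz to the positive semidefinite form $(P,Q) \mapsto \tr(\rho_{AB}P^\dagger Q)$ with $P = X_A \otimes I_B$ and $Q = I_A \otimes Y_B$ gives the bound $\leq 1$; equality forces $\tr\bigl(\rho_{AB}(X_A \otimes I_B - I_A \otimes Y_B)^2\bigr) = 0$, so $(X_A \otimes I_B - I_A \otimes Y_B)\ket{u} = 0$ for every $\ket{u}$ in the support of $\rho_{AB}$. Therefore $\rho_{AB}$ is supported on $\bigoplus_\lambda V^A_\lambda \otimes V^B_\lambda$, where $V^A_\lambda$ and $V^B_\lambda$ are the $\lambda$-eigenspaces of $X_A$ and $Y_B$. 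Because $X_A$ has $\rho_A$-mean $0$ and $\rho_A$-variance $1$, it must have at least two distinct eigenvalues $\lambda, \lambda'$ of positive $\rho_A$-weight; taking $M_0$ to be the spectral projector of $X_A$ onto $V^A_\lambda$, $N_0$ the corresponding projector of $Y_B$, and $M_1 = I - M_0$, $N_1 = I - N_0$, the block-diagonal support of $\rho_{AB}$ makes the cross-term traces vanish, while both the $\lambda$- and $\lambda'$-blocks carry positive weight and yield $0 < \tr(\rho_{AB}(M_0 \otimes N_0)) < 1$.
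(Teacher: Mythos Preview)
The paper does not give its own proof of this theorem; it is quoted from \cite{Beigi12} without argument. Your proof is correct and self-contained.

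The bound $0\le\mu\le 1$ and parts (a), (c) are handled cleanly via Theorem~\ref{thm:schmidt}: identifying $\rho_A^{1/2}\otimes\rho_B^{1/2}$ as the first Schmidt pair of $\widetilde\rho_{AB}$ is exactly the right move for (a), and the explicit Schmidt decomposition of $\widetilde\rho_{AB}$ for a pure $\ket\psi$ is correct for (c). For (b)$(\Rightarrow)$, your Cauchy--Schwarz argument on the sesquilinear form $(P,Q)\mapsto\tr(\rho_{AB}P^\dagger Q)$ is the standard and correct route: equality forces $(X_A\otimes I-I\otimes Y_B)$ to annihilate the support of $\rho_{AB}$, so $\rho_{AB}$ is block-diagonal over matching eigenspaces and the spectral projectors give the desired (projective) measurement.

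One small wrinkle in (b)$(\Leftarrow)$: your explicit observable $X_A=\sqrt{(1-p)/p}\,M_0-\sqrt{p/(1-p)}\,M_1$ satisfies the variance constraint $\tr(\rho_A X_A^2)=1$ only when $M_0^2=M_0$, i.e., for projective measurements. If the statement is read as allowing a general two-outcome POVM, the quickest repair is to invoke Theorem~\ref{thm:ab}(b) instead: the local measurement is a local channel to the classical pair $(E,F)$ with $p_{01}=p_{10}=0$ and $0<p_{00}<1$, for which $\mu(p_{EF})=1$, whence $\mu(\rho_{AB})\ge 1$. Since your $(\Rightarrow)$ direction produces projectors anyway, the two readings are equivalent and the issue is cosmetic.
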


Part (b) of this theorem is the quantum version of Witsenhausen's result~\cite{Witsenhausen} that maximal correlation of a bipartite distribution is equal to 1 iff the distribution has a ``common data." 

We finish this section by proving two lemmas which will be used in the next section.

\begin{lemma}\label{lem:large}
Let $\rho_{AB}$ be a bipartite density matrix such that $\bra \psi \rho_{AB} \ket \psi \geq1-\epsilon$, where $\ket{\psi}_{AB}=\frac{1}{\sqrt 2}(\ket{00}+\ket{11})$. Then 
$$\mu(\rho_{AB})\geq 1-2\epsilon.$$
\end{lemma}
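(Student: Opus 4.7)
The plan is to exhibit a specific pair of Hermitian observables $X_A, Y_B$ admissible in the definition \eqref{eq:con0}--\eqref{eq:con2} of $\mu(\rho_{AB})$ and compute the resulting lower bound. I take the mean-subtracted, normalized Pauli-$Z$ observables,
\[
X_A = \frac{Z - m_A I_A}{\sqrt{1-m_A^2}}, \qquad Y_B = \frac{Z - m_B I_B}{\sqrt{1-m_B^2}},
\]
where $m_A = \tr(\rho_A Z)$ and $m_B = \tr(\rho_B Z)$. By construction these are Hermitian, of zero mean, and of unit second moment with respect to $\rho_A,\rho_B$, so writing $c := \tr(\rho_{AB}\, Z\otimes Z)$ the definition yields
\[
\mu(\rho_{AB}) \;\geq\; \frac{c - m_A m_B}{\sqrt{(1-m_A^2)(1-m_B^2)}}.
\]

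The first ingredient is the operator inequality $Z\otimes Z = 2(\ket{00}\bra{00}+\ket{11}\bra{11}) - I \geq 2\ket\psi\bra\psi - I$, valid because $\ket\psi$ lies in the range of the rank-$2$ projector onto $\mathrm{span}\{\ket{00},\ket{11}\}$. Taking the expectation in $\rho_{AB}$ yields $c \geq 2(1-\epsilon) - 1 = 1 - 2\epsilon$. Assume from now on that $\epsilon \leq 1/2$, since otherwise the statement is trivial.

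To handle the normalization correction, write $\alpha := \rho_{00,00}-\rho_{11,11}$, $\beta := \rho_{01,01}-\rho_{10,10}$, and $x := \rho_{00,00}+\rho_{11,11}$; a direct computation gives $m_A = \alpha+\beta$ and $m_B = \alpha-\beta$. For fixed $\alpha$, the displayed ratio is minimized at $\beta = 0$, since increasing $\beta^2$ strictly increases the numerator $c - \alpha^2 + \beta^2$ and strictly decreases the denominator $(1-\alpha^2-\beta^2)^2 - 4\alpha^2\beta^2$. The PSD constraint $|\rho_{00,11}|^2 \leq \rho_{00,00}\rho_{11,11}$ together with $\bra\psi\rho_{AB}\ket\psi = x/2 + \mathrm{Re}(\rho_{00,11}) \geq 1-\epsilon$ then forces $x \geq 1-\epsilon$ and the quadratic coupling $\alpha^2 \leq 4(1-\epsilon)(x-(1-\epsilon))$. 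Setting $s := x - (1-\epsilon) \in [0,\epsilon]$, so that $c = 1-2\epsilon+2s$, the desired inequality $(c - \alpha^2)/(1-\alpha^2) \geq 1-2\epsilon$ rearranges to $s \geq \epsilon\alpha^2$; using $\alpha^2 \leq 4(1-\epsilon)s$ this reduces to $4\epsilon(1-\epsilon) \leq 1$, which is trivially true.

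The main obstacle is precisely this control of the mean corrections: a naive bound like $|m_A|,|m_B| \leq 2\sqrt\epsilon$ from partial-trace contractivity would only give $\mu(\rho_{AB}) \geq 1 - O(\sqrt\epsilon)$, so one really must use the quadratic coupling between the fidelity deficit $s$ and the diagonal asymmetry $\alpha^2$ that is induced by the PSD constraint on $\rho_{AB}$.
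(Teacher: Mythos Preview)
Your proof is correct and is essentially the paper's argument in different clothing: the paper measures both qubits in the computational basis and bounds the maximal correlation of the resulting $2\times 2$ distribution, which is exactly the Pearson-type quantity you obtain by plugging the centered Pauli-$Z$ observables into \eqref{eq:con0}--\eqref{eq:con2}. Your variables $(x,\alpha,\beta)$ coincide with the paper's $(q,a,b)$, the reduction to $\beta=0$ matches the paper's reduction to $b=0$, and the crucial PSD bound $\alpha^2\le 4(1-\epsilon)\bigl(x-(1-\epsilon)\bigr)$ is precisely the paper's $|a|\le 2\sqrt{(1-\epsilon)(q-1+\epsilon)}$; only the closing algebra is arranged a little more directly in your version.
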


\begin{proof} There is nothing to prove for $\epsilon\geq 1/2$ since maximal correlation is non-negative. So we assume that $0\leq  \epsilon<1/2$.
Let us define 
\begin{align*}
\ket{\psi_i} = \frac{1}{\sqrt 2}\big(\ket{00}+ (-1)^i \ket{11}\big), \qquad i=0, 1
\\
\ket{\psi_i} = \frac{1}{\sqrt 2}\big(\ket{01}+ (-1)^i \ket{10}\big)\qquad i=2,3.
\end{align*}
Then $\ket{\psi_0}=\ket \psi$  and $\{\ket{\psi_{i}}: 0\leq i\leq 3 \}$ is an orthonormal basis for the space of two qubits. Therefore there are $c_{ij}\in \mathbb C$ such that 
$$\rho_{AB}=\sum_{i,j=0}^3 c_{ij}\ket{\psi_i}\bra{\psi_j}.$$
By assumption we have $c_{00}\geq1-\epsilon$. Moreover, since $\rho_{AB}$ is a density matrix $\sum_{i} c_{ii}=1$ and 
\begin{align}\label{eq:ineq20}
|c_{01}|^2\leq c_{00}c_{11}, \qquad |c_{12}|^2\leq c_{22}c_{33}.
\end{align}

Now suppose that we measure each qubit $A, B$ in the computational basis $\{\ket 0, \ket 1\}$. We let $p_{EF}$ be the outcome distribution. We have 
\begin{align*}
p_{00} & = \frac{1}{2}(c_{00}+ c_{11} + c_{01} + c_{10})\\
p_{11} & = \frac{1}{2}(c_{00}+ c_{11} - c_{01} - c_{10})\\
p_{01} & = \frac{1}{2}(c_{22}+ c_{33} + c_{23} + c_{32})\\
p_{10} & = \frac{1}{2}(c_{22}+ c_{33} - c_{23} - c_{32}).
\end{align*}
By the monotonicity of maximal correlation under local measurements (Theorem~\ref{thm:ab}) we have $\mu(\rho_{AB})\geq \mu(p_{EF})$. Moreover, maximal correlation of $p_{EF}$ can be computed using~\eqref{eq:max-det}. Thus we obtain
\begin{align*}
\mu(\rho_{AB})&\geq \frac{|p_{00}p_{11} - p_{01}p_{10}|}{\big[{(p_{00}+p_{01})(p_{00}+p_{10})(p_{10}+p_{11})(p_{01}+p_{11})}\big]^{\frac 1 2}}.
\end{align*}

Let $q=c_{00}+c_{11}$ (which gives $c_{22}+c_{33}=1-q$), $a=c_{01}+c_{10}$ and $b=c_{23}+c_{32}$. Then $q\geq 1-\epsilon$, and by~\eqref{eq:ineq20} we have $|a|\leq q$ and $|b|\leq 1-q$. Indeed there is a stronger upper bound on $|a|$; Using $c_{00}\geq 1-\epsilon$ and $c_{00}+c_{11}=q$ we have
\begin{align*}
|a|  \leq 2|c_{01}| \leq 2\sqrt{c_{00}c_{11}} \leq 2\sqrt{(1-\epsilon)(q-1+\epsilon)}.
\end{align*}
Using this inequality, and the fact that $q\geq 1-\epsilon \geq 1/2$ we find that 
$|a|^2\leq 2q-1.$

Now rewriting the above bound in terms of these new variables we obtain
\begin{align*}
\mu(\rho_{AB})&\geq \frac{|q^2-a^2 - (1-q)^2 +b^2|}{\big[{(1+ a+b)(1+ a-b)(1- a+b)(1- a-b)}\big]^{1/2}}\\
&= \frac{|2q-1 +b^2-a^2|}{\big[ 1+ a^4 + b^4- 2a^2b^2 - 2a^2 - 2b^2   \big]^{1/2}}\\
&= \frac{2q-1 +b^2-a^2}{\big[ 1+ a^4 + b^4- 2a^2b^2 - 2a^2 - 2b^2   \big]^{1/2}},
\end{align*}
where in the last line we use $|a|^2\leq 2q-1$. The numerator of the latter bound is obviously an increasing function of $|b|$. Moreover using the fact that $|b|^2\leq 1-q\leq 1+|a|^2$, it is not hard to see that the denominator is a decreasing function of $|b|$. Therefore 
\begin{align*}
\mu(\rho_{AB})&\geq \frac{2q-1-a^2}{1-a^2}\\
&= 1+ \frac{2(q-1)}{1-a^2}\\
&\geq 1+2(q-1)\\
&\geq 1-2\epsilon.
\end{align*}

\end{proof}

\begin{lemma}\label{lem:mu-cont}
Let $\{\sigma_{AB}^{(n)}\}$ be a sequence of density matrices and 
suppose that $\lim_{n\rightarrow \infty} \sigma_{AB}^{(n)} = \rho_{AB}$ and that $\lim_{n\rightarrow \infty} \mu(\sigma_{AB}^{(n)}) = \lambda$. Then we have $\mu(\rho_{AB})\leq \lambda$. In particular if $\lambda=0$, then $\rho_{AB}$ is a product state. 
\end{lemma}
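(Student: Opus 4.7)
The plan is to prove the lower semicontinuity $\mu(\rho_{AB})\leq\liminf_n \mu(\sigma_{AB}^{(n)})$ directly from the variational characterization~\eqref{eq:con0}--\eqref{eq:con2}. The idea is that any near-optimal test pair $(X,Y)$ for $\rho_{AB}$ can be cheaply corrected into a feasible test pair $(\tilde X^{(n)},\tilde Y^{(n)})$ for $\sigma_{AB}^{(n)}$ whose objective value differs from that of $(X,Y)$ by $o(1)$, so $\mu(\sigma_{AB}^{(n)})$ cannot stay asymptotically below $\mu(\rho_{AB})$.

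Concretely, I would fix $\varepsilon>0$ and (using compactness after restricting to the supports of $\rho_A$ and $\rho_B$) choose operators $X,Y$ satisfying~\eqref{eq:con1} and~\eqref{eq:con2} for $\rho_{AB}$ with $|\tr(\rho_{AB}\,X\otimes Y^\dagger)|\geq \mu(\rho_{AB})-\varepsilon$. To restore the zero-mean condition against $\sigma^{(n)}$, I set
$$X^{(n)} := X-\tr(\sigma_A^{(n)} X)\,I_A, \qquad Y^{(n)} := Y-\tr(\sigma_B^{(n)} Y)\,I_B.$$
Since $\sigma_A^{(n)}\to\rho_A$ and $\tr(\rho_A X)=0$, the subtracted scalars vanish in the limit, so $X^{(n)}\to X$ and $Y^{(n)}\to Y$. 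By continuity of the relevant bilinear forms, the variances
$$\alpha_n^2 := \tr\bigl(\sigma_A^{(n)} X^{(n)} (X^{(n)})^\dagger\bigr), \qquad \beta_n^2 := \tr\bigl(\sigma_B^{(n)} Y^{(n)} (Y^{(n)})^\dagger\bigr)$$
both tend to $\tr(\rho_A XX^\dagger)=\tr(\rho_B YY^\dagger)=1$, and in particular are positive for all large $n$. Rescaling by $\tilde X^{(n)}:=X^{(n)}/\alpha_n$ and $\tilde Y^{(n)}:=Y^{(n)}/\beta_n$ produces a feasible test pair for $\sigma_{AB}^{(n)}$.

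By the definition of $\mu(\sigma_{AB}^{(n)})$ we then have
$$\mu(\sigma_{AB}^{(n)}) \geq \frac{1}{\alpha_n\beta_n}\,\bigl|\tr\bigl(\sigma_{AB}^{(n)}\,X^{(n)}\otimes (Y^{(n)})^\dagger\bigr)\bigr|,$$
and the right-hand side converges to $|\tr(\rho_{AB}\,X\otimes Y^\dagger)|\geq \mu(\rho_{AB})-\varepsilon$. Thus $\lambda\geq \mu(\rho_{AB})-\varepsilon$ for every $\varepsilon>0$, giving $\mu(\rho_{AB})\leq\lambda$. The ``in particular'' claim is then immediate: $\lambda=0$ forces $\mu(\rho_{AB})=0$, which by Theorem~\ref{thm:extreme}(a) is equivalent to $\rho_{AB}=\rho_A\otimes\rho_B$.

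The main subtlety I anticipate is the possible support mismatch between $\sigma_A^{(n)},\sigma_B^{(n)}$ and their limits, so that an optimal $(X,Y)$ for $\rho_{AB}$ need not satisfy the $\sigma^{(n)}$-constraints on the nose. The mean-shift and rescaling above handle this uniformly, and crucially the argument only requires the corrected pair to be feasible (not optimal) for $\sigma^{(n)}$, which is precisely the direction that yields an upper bound on $\mu(\rho_{AB})$.
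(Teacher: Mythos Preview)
Your proposal is correct and is essentially the paper's own argument: both proofs take a (near-)optimal pair $(X,Y)$ for $\rho_{AB}$, shift by $\tr(\sigma_A^{(n)}X)$ and $\tr(\sigma_B^{(n)}Y)$ to restore zero mean, rescale to restore unit variance, and then pass to the limit. The only cosmetic difference is that the paper writes the shift and rescaling in a single formula and works with the exact optimizer rather than an $\varepsilon$-near optimizer.
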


\begin{proof} Let $X, Y$ be the operators that achieve the optimal value $\mu(\rho_{AB})$ in \eqref{eq:con0}-\eqref{eq:con2}. For every $n$ define 
\begin{align*}
X_n:=\frac{X- \tr(\sigma_A^{(n)})}{\big[  \tr\big(\sigma_A^{(n)} XX^{\dagger}\big) - |\tr(\sigma_A^{(n)} X)|^2   \big]^{1/2}},
\end{align*}
and 
\begin{align*}
Y_n:=\frac{Y- \tr(\sigma_B^{(n)})}{\big[  \tr(\sigma_B^{(n)} YY^{\dagger}) - |\tr(\sigma_B^{(n)} Y)|^2   \big]^{1/2}}.
\end{align*}
Then for every $n$ we have $\tr(\sigma_A^{(n)}X_n) = \tr(\sigma_B^{(n)}Y_n)=0$ and $\tr(\sigma_A^{(n)}X_nX_n^{\dagger})= \tr(\sigma_B^{(n)} YY^{\dagger})=1$. On the other hand, it is easy to see that 
$$\lim_{n\rightarrow \infty} |\tr \big(\sigma_{AB}^{(n)} X_n\otimes Y_n^{\dagger}\big)  | = |\tr\big(\rho_{AB}X\otimes Y^{\dagger}\big)|=\mu(\rho_{AB}).$$
This means that if $\mu(\rho_{AB})\geq \lambda+\epsilon$ for some $\epsilon>0$, then for sufficiently large $n$, $\mu\big(\sigma_{AB}^{(n)}\big) \geq \lambda+\epsilon/2$, which is a contradiction since $\lim_{n\rightarrow \infty} \mu(\sigma_{AB}^{(n)}) = \lambda$. Therefore $\mu(\rho_{AB})\leq \lambda$.

\end{proof}

Note that, by the assumption of the above lemma we cannot conclude the equality of $\mu(\rho_{AB})$ and $\lambda$. For example, define the distributions $p_{AB}^{(n)}$ by $p_{00}^{(n)} = 1-1/n$, $p_{11}^{(n)}=1/n$ and $p_{01}^{(n)}=p_{10}^{(n)}=0$. Then for every $n$ we have $\mu(p_{AB}^{(n)})=1$, but this sequence of distributions converges to $q_{AB}$ with $q_{00}=1$ and $q_{01}=q_{10}=q_{11}=0$, for which we have $\mu(q_{AB})=0$.


\section{Maximal entanglement}

In this section we define a measure of entanglement in terms of maximal correlation. By part (c) of Theorem~\ref{thm:extreme}, maximal correlation is already a measure of entanglement on pure states; This is the measure that is $0$ for pure product states and $1$ for pure entangled states. Nevertheless, maximal correlation is not a measure of entanglement since there are bipartite (classical) distributions whose maximal correlation is $1$. Convex roof extension is the idea that is usually applied to construct measures of entanglement in such situations. In our case however, convex roof does not result in a measure that satisfies our desired properties. So we propose the following definition:
$$\mu_{\ent}(\rho_{AB}):= \inf \max_i \mu(\tau_{AB}^{(i)}),$$
where the infimum is taken over all decompositions 
$$\rho_{AB}=\sum_i p_i \tau_{AB}^{(i)}$$ 
where $p_i\geq 0$ and $\tau_{AB}^{(i)}$'s are density matrices. We call $\mu_\ent(\cdot)$ maximal entanglement.
From the definition we clearly have
\begin{enumerate}
\item[\rm{(i)}] $0\leq \mu_{\ent}(\rho_{AB})\leq 1$.
\item[\rm{(ii)}] $\mu_{\ent}(\rho_{AB})\leq \mu(\rho_{AB})$.
\item[\rm{(iii)}] $\mu_{\ent}(\rho_{AB})=0$ for all separable states $\rho_{AB}$.
\item[\rm{(iv)}] If $\rho_{AB}$ is pure then $\mu_\ent(\rho_{AB}) = \mu(\rho_{AB})$.
\item[\rm{(v)}] Maximal entanglement is quasi-convex, i.e., we have $\mu_\ent\big(\sum_i\lambda_i \rho_{AB}^{(i)}\big)\leq \max_i \mu_\ent(\rho_{AB}^{(i)})$.
\end{enumerate}
Statement (iii) holds simply because every separable state can be written in the form $\rho_{AB} = \sum_i p_i \tau_A^{(i)}\otimes \tau_B^{(i)}$, and for all $i$ we have $\mu(\tau_A^{(i)}\otimes \tau_B^{(i)})=0$. Statement (iv) holds because pure states essentially have a unique decomposition. 

The following theorems state the main properties of maximal entanglement. 

\begin{theorem}\label{thm:max-ent-data} 
Suppose that $\sigma_{A'B'}=\Phi\otimes \Psi(\rho_{AB})$ where $\Phi_{A\rightarrow A'}$ and $\Psi_{B\rightarrow B'}$ are completely positive trace-preserving maps. Then 
$$\mu_{\ent}(\rho_{AB})\geq \mu_{\ent}(\sigma_{A'B'}).$$
\end{theorem}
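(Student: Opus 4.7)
The plan is to lift a good decomposition of $\rho_{AB}$ to a decomposition of $\sigma_{A'B'}$ via the local channel, and then invoke the monotonicity of $\mu$ itself (Theorem~\ref{thm:ab}(b)) on each term.

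First I would fix an arbitrary decomposition $\rho_{AB} = \sum_i p_i \tau_{AB}^{(i)}$ with $p_i \geq 0$, $\sum_i p_i = 1$, and each $\tau_{AB}^{(i)}$ a density matrix. By linearity of $\Phi \otimes \Psi$,
\begin{equation*}
\sigma_{A'B'} \;=\; \Phi \otimes \Psi(\rho_{AB}) \;=\; \sum_i p_i \, \Phi \otimes \Psi\bigl(\tau_{AB}^{(i)}\bigr) \;=\; \sum_i p_i \, \widetilde\tau_{A'B'}^{(i)},
\end{equation*}
where $\widetilde\tau_{A'B'}^{(i)} := \Phi \otimes \Psi(\tau_{AB}^{(i)})$ are again density matrices, since each local map is completely positive and trace-preserving. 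Thus any decomposition of $\rho_{AB}$ induces a legitimate decomposition of $\sigma_{A'B'}$, so this decomposition is eligible in the infimum defining $\mu_{\ent}(\sigma_{A'B'})$.

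Next I would apply Theorem~\ref{thm:ab}(b) termwise: $\mu\bigl(\widetilde\tau_{A'B'}^{(i)}\bigr) \leq \mu\bigl(\tau_{AB}^{(i)}\bigr)$ for every $i$. Taking the maximum over $i$ on both sides and using the definition of $\mu_{\ent}(\sigma_{A'B'})$ as an infimum,
\begin{equation*}
\mu_{\ent}(\sigma_{A'B'}) \;\leq\; \max_i \mu\bigl(\widetilde\tau_{A'B'}^{(i)}\bigr) \;\leq\; \max_i \mu\bigl(\tau_{AB}^{(i)}\bigr).
\end{equation*}
Finally, taking the infimum of the right-hand side over all decompositions of $\rho_{AB}$ yields $\mu_{\ent}(\sigma_{A'B'}) \leq \mu_{\ent}(\rho_{AB})$, which is the claim.

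There is essentially no obstacle: the proof is a one-page reduction that works for any quasi-convex quantity built as an infimum over decompositions of a base measure that already satisfies the local-CPTP data processing inequality. The only point requiring even minor care is that $\widetilde\tau_{A'B'}^{(i)}$ really is a density matrix (so that $\sum_i p_i \widetilde\tau_{A'B'}^{(i)}$ is an admissible decomposition in the variational problem for $\mu_{\ent}(\sigma_{A'B'})$), which is immediate from $\Phi$ and $\Psi$ being CPTP. Note also that we do not need the infimum to be attained; the definition of infimum suffices for the inequality above.
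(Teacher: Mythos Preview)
Your proof is correct and follows exactly the same approach as the paper: push an arbitrary decomposition of $\rho_{AB}$ through $\Phi\otimes\Psi$, apply Theorem~\ref{thm:ab}(b) termwise, take the maximum over $i$, and then the infimum over decompositions. The paper's write-up is slightly terser but the argument is identical.
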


\begin{proof} Let $\rho_{AB}=\sum_i p_i \tau_{AB}^{(i)}$ be a decomposition of $\rho_{AB}$. Then 
$\sigma_{A'B'}=\sum_i p_i \Phi\otimes \Psi(\tau_{AB}^{(i)})$ is a decomposition of $\sigma_{A'B'}$. Using Theorem~\ref{thm:ab} for every $i$ we have
\begin{align*}
\mu\big(\tau_{AB}^{(i)}\big) \geq  \mu\big(\Phi\otimes \Psi(\tau_{AB}^{(i)})\big).
\end{align*}
Then taking maximum over $i$, and then infimum over all decompositions of $\rho_{AB}$ we obtain the result. 
\end{proof}

\begin{theorem}\label{thm:max-ent-tensorize} For all density matrices $\rho_{AB}, \sigma_{A'B'}$ we have
$\mu_{\ent}(\rho_{AB}\otimes \sigma_{A'B'}) = \max\{\mu_{\ent}(\rho_{AB}), \mu_{\ent}(\sigma_{A'B'})\}$.
\end{theorem}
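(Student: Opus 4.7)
The plan is to prove the two inequalities separately, with the $\geq$ direction following almost immediately from the monotonicity result just proved (Theorem~\ref{thm:max-ent-data}), and the $\leq$ direction obtained by combining the tensor decompositions of the two factors and invoking the tensorization property of $\mu(\cdot)$ itself (Theorem~\ref{thm:ab}(a)).

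For the direction $\mu_{\ent}(\rho_{AB}\otimes\sigma_{A'B'})\geq \max\{\mu_\ent(\rho_{AB}),\mu_\ent(\sigma_{A'B'})\}$, I would note that the partial trace $\tr_{A'B'}$ is a local CPTP map (it acts as $\tr_{A'}$ on Alice's system tensored with $\tr_{B'}$ on Bob's system) and sends $\rho_{AB}\otimes\sigma_{A'B'}$ to $\rho_{AB}$. By Theorem~\ref{thm:max-ent-data}, this forces $\mu_\ent(\rho_{AB}\otimes\sigma_{A'B'})\geq \mu_\ent(\rho_{AB})$, and symmetrically for $\sigma_{A'B'}$, yielding the maximum.

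For the direction $\mu_{\ent}(\rho_{AB}\otimes\sigma_{A'B'})\leq \max\{\mu_\ent(\rho_{AB}),\mu_\ent(\sigma_{A'B'})\}$, I would fix $\varepsilon>0$ and pick decompositions $\rho_{AB}=\sum_i p_i\tau_{AB}^{(i)}$ and $\sigma_{A'B'}=\sum_j q_j\omega_{A'B'}^{(j)}$ that approximate the respective infima, so that $\max_i \mu(\tau_{AB}^{(i)})\leq \mu_\ent(\rho_{AB})+\varepsilon$ and similarly for $\sigma_{A'B'}$. Then
\[
\rho_{AB}\otimes \sigma_{A'B'} \;=\; \sum_{i,j} p_i q_j\, \tau_{AB}^{(i)}\otimes \omega_{A'B'}^{(j)}
\]
is a valid decomposition of the product state, and by Theorem~\ref{thm:ab}(a) each term satisfies $\mu(\tau_{AB}^{(i)}\otimes\omega_{A'B'}^{(j)})=\max\{\mu(\tau_{AB}^{(i)}),\mu(\omega_{A'B'}^{(j)})\}$. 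Taking the max over $(i,j)$ and then letting $\varepsilon\to 0$ gives the upper bound.

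I do not expect any serious obstacle: both halves are short consequences of results already in hand. The only point worth being careful about is to recognize that the partial trace used in the lower bound really is of the required local product form $\Phi\otimes\Psi$, so that Theorem~\ref{thm:max-ent-data} applies; and to observe that the product decomposition of $\rho_{AB}\otimes\sigma_{A'B'}$ constructed above is legitimate (the coefficients $p_iq_j$ form a probability distribution and each $\tau_{AB}^{(i)}\otimes\omega_{A'B'}^{(j)}$ is a density matrix), so that it is indeed competing in the infimum defining $\mu_\ent(\rho_{AB}\otimes\sigma_{A'B'})$.
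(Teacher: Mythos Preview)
Your proposal is correct and follows essentially the same argument as the paper: the lower bound via local partial traces and Theorem~\ref{thm:max-ent-data}, and the upper bound via the product of near-optimal decompositions together with Theorem~\ref{thm:ab}(a). The only cosmetic difference is your explicit $\varepsilon$-approximation of the infimum, which the paper phrases simply as ``taking infimum over all decompositions.''
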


\begin{proof}  Using the data processing inequality (Theorem~\ref{thm:max-ent-data}) by considering \emph{local} partial trace maps $\tr_{AB}(\cdot)$ and $\tr_{A'B'}(\cdot)$ we obtain 
$$\mu_{\ent}(\rho_{AB}\otimes \sigma_{A'B'}) \geq \max\{\mu_{\ent}(\rho_{AB}), \mu_{\ent}(\sigma_{A'B'})\}.$$
For the other direction consider decompositions $\rho_{AB}=\sum_i p_i \tau_{AB}^{(i)}$  and $\sigma_{A'B'} = \sum_j q_j \xi_{A'B'}^{(j)}$. Then $$\rho_{AB}\otimes \sigma_{A'B'}= \sum_{i,j} p_iq_j \tau_{AB}^{(i)}\otimes \xi_{A'B'}^{(j)},$$
is a decomposition of $\rho_{AB}\otimes \sigma_{A'B'}$. As a result 
\begin{align*}
\mu_\ent(\rho_{AB}\otimes \sigma_{A'B'}) &\leq \max_{i,j} \mu\big(\tau_{AB}^{(i)}\otimes \xi_{A'B'}^{(j)}\big)\\
& = \max_{i,j} \max\left\{ \mu\big(\tau_{AB}^{(i)}\big),  \mu\big(\xi_{A'B'}^{(j)}\big)     \right\}.
\end{align*}
Taking infimum over all decompositions of $\rho_{AB}$ and $\sigma_{A'B'}$ gives the desired result. 

\end{proof}

The above two theorems imply the following analogue of Corollary~\ref{cor:1} for maximal entanglement.

\begin{corollary}\label{cor:2}
Suppose that $\rho_{AB}^{\otimes n}$, for some $n$, can be locally transformed to $\sigma_{EF}$. Then
$$\mu_\ent(\rho_{AB}) \geq \mu_\ent(\sigma_{EF}).$$
\end{corollary}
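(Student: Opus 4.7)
The plan is to combine the two preceding theorems in essentially the same way that Corollary~\ref{cor:1} was obtained from Theorem~\ref{thm:ab}. First I would apply Theorem~\ref{thm:max-ent-tensorize} inductively in $n$. Since that theorem gives $\mu_\ent(\rho_1 \otimes \rho_2) = \max\{\mu_\ent(\rho_1), \mu_\ent(\rho_2)\}$ for any two bipartite states, induction yields
\begin{align*}
\mu_\ent\big(\rho_{AB}^{\otimes n}\big) = \max_{1 \leq i \leq n} \mu_\ent(\rho_{AB}) = \mu_\ent(\rho_{AB}),
\end{align*}
so tensor powers do not increase the measure. Second, by hypothesis there exist CPTP maps $\Phi_{A^n \to E}$ and $\Psi_{B^n \to F}$ such that $\sigma_{EF} = \Phi \otimes \Psi\big(\rho_{AB}^{\otimes n}\big)$. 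Applying Theorem~\ref{thm:max-ent-data} (monotonicity of $\mu_\ent$ under local CPTP maps) to this pair of maps yields
\begin{align*}
\mu_\ent\big(\rho_{AB}^{\otimes n}\big) \geq \mu_\ent(\sigma_{EF}).
\end{align*}

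Chaining the equality and the inequality gives $\mu_\ent(\rho_{AB}) \geq \mu_\ent(\sigma_{EF})$, as required. There is no substantive obstacle here; the corollary is precisely the combined content of the two theorems just established, and the argument is the direct analogue of the passage from Theorem~\ref{thm:ab} to Corollary~\ref{cor:1}. The only point worth checking is that Theorem~\ref{thm:max-ent-data} applies verbatim when $A$ is replaced by the composite register $A^n$ and $B$ by $B^n$, which it does since the theorem is stated for arbitrary local CPTP maps on arbitrary registers.
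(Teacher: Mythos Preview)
Your proposal is correct and matches the paper's approach exactly: the paper simply states that the corollary follows from the two preceding theorems (Theorem~\ref{thm:max-ent-data} and Theorem~\ref{thm:max-ent-tensorize}) without writing out a proof, and your argument is precisely the intended one-line combination of those results.
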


\begin{theorem}\label{thm:max-ent-faithful} Maximal entanglement is faithful, i.e., $\mu_{\ent}(\rho_{AB})=0$ if and only if $\rho_{AB}$ is separable. 
\end{theorem}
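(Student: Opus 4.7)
The plan is to handle both directions of the equivalence. The forward direction, that $\rho_{AB}$ separable implies $\mu_{\ent}(\rho_{AB}) = 0$, is already recorded as property (iii) in the list following the definition of $\mu_{\ent}$, so the whole task reduces to proving the converse: if $\mu_{\ent}(\rho_{AB}) = 0$ then $\rho_{AB}$ is separable.

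For the converse, I would unpack the infimum defining $\mu_{\ent}$: if $\mu_{\ent}(\rho_{AB}) = 0$, then for every integer $n \geq 1$ there exists a decomposition $\rho_{AB} = \sum_i p_i^{(n)} \tau_{AB}^{(i,n)}$ with $\mu(\tau_{AB}^{(i,n)}) \leq 1/n$ for every $i$. The natural strategy is to pass to a subsequential limit of such decompositions and show that the limit is a decomposition of $\rho_{AB}$ into product states. Two difficulties stand in the way: the number of components is a priori unbounded in $n$, and maximal correlation is not continuous (as the example right after Lemma~\ref{lem:mu-cont} shows), so one cannot just take limits naively.

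The first difficulty I would resolve by Carath\'eodory's theorem applied to the real vector space of Hermitian operators on $\mathcal{H}_A \otimes \mathcal{H}_B$, which has real dimension $(d_A d_B)^2$. Since $\rho_{AB}$ lies in the convex hull of the set $S_n = \{\tau : \mu(\tau) \leq 1/n\}$, it can be written as a convex combination of at most $D := (d_A d_B)^2 + 1$ elements of $S_n$. After this reduction, compactness of the probability simplex on $D$ points and of the set of density matrices lets me extract a subsequence in $n$ along which $p_i^{(n)} \to p_i^*$ for every $i$ and $\tau_{AB}^{(i,n)} \to \tau_{AB}^{(i,*)}$ for every $i$ with $p_i^* > 0$ (components with $p_i^* = 0$ contribute vanishingly and may be discarded).

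The conceptual heart of the argument, and the step I expect to be the main obstacle, is handling the second difficulty by invoking Lemma~\ref{lem:mu-cont}: for each $i$ with $p_i^* > 0$, the convergence $\tau_{AB}^{(i,n)} \to \tau_{AB}^{(i,*)}$ together with $\mu(\tau_{AB}^{(i,n)}) \to 0$ gives $\mu(\tau_{AB}^{(i,*)}) = 0$, and Theorem~\ref{thm:extreme}(a) then forces $\tau_{AB}^{(i,*)}$ to be a product state. Passing to the limit in the decomposition yields $\rho_{AB} = \sum_{i : p_i^* > 0} p_i^* \tau_{AB}^{(i,*)}$, a convex combination of product states and therefore separable by definition. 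The reason this step is delicate is exactly the failure of continuity of $\mu$; Lemma~\ref{lem:mu-cont} is the right tool because it supplies precisely the one-sided semi-continuity that marries the compactness argument with the bound $\mu(\tau_{AB}^{(i,n)}) \to 0$.
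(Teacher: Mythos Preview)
Your proposal is correct and follows essentially the same route as the paper's own proof: reduce the number of terms via Carath\'eodory's theorem, extract a convergent subsequence by compactness, and then use Lemma~\ref{lem:mu-cont} together with Theorem~\ref{thm:extreme}(a) to conclude that each limiting term is a product state. Your write-up is in fact slightly more explicit than the paper's in naming the dimension bound $D=(d_Ad_B)^2+1$ and in handling the indices with $p_i^*=0$.
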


\begin{proof} We already know that separability of $\rho_{AB}$ implies $\mu_{\ent}(\rho_{AB})=0$. So we need to show that if $\mu_{\ent}(\rho_{AB})=0$ then $\rho_{AB}$ is separable.  

By definition $\mu_{\ent}(\rho_{AB})=0$ means that for every $\epsilon>0$ there is a decomposition $\rho_{AB}=\sum_i p_i \tau_{AB}^{(i)}$ such that $\mu(\tau_{AB}^{(i)})\leq \epsilon$ for all $i$. Let $\mathcal S_{\epsilon}$ be the set of all density matrices whose maximal correlation is at most $\epsilon$:
$$\mathcal S_{\epsilon}:=\{\sigma_{AB}:\, \mu(\sigma_{AB})\leq \epsilon\}.$$
 Therefore, $\mu_{\ent}(\rho_{AB})=0$ implies that $\rho_{AB}$ is in the convex hull of $\mathcal S_{\epsilon}$ for every $\epsilon>0$. Now using Carath\'eodory's theorem, for every $\epsilon>0$ there is a decomposition 
\begin{align*}
\rho_{AB} = \sum_{i=1}^{M} p_i^{(\epsilon)} \tau_{AB}^{(i, \epsilon)},
\end{align*}
such that $\mu\big(\tau_{AB}^{(i, \epsilon)}\big)\leq \epsilon$ for all $i$, and $M=M(d_A, d_B)$ is a constant independent of $\epsilon$. Now by a standard compactness argument there is a sequence  $\epsilon_1> \epsilon_2> \cdots$ with $\lim_{n\rightarrow \infty} \epsilon_n=0$ such that 
$$\Big(p_1^{(\epsilon_n)}, \dots, p_M^{(\epsilon_n)}, \tau_{AB}^{(1, \epsilon_n)}, \dots, \tau_{AB}^{(M, \epsilon_n)}\Big),$$
converges to some
$$\Big(p_1^{(0)}, \dots, p_M^{(0)}, \tau_{AB}^{(1, 0)}, \dots, \tau_{AB}^{(M, 0)}\Big).$$
This tuple then gives a valid decomposition 
$$\rho_{AB}=\sum_{i=1}^M p_i^{(0)} \tau_{AB}^{(i, 0)},$$
where $\{p_i^{(0)}: 1\leq i\leq M\}$ is a probability distribution and each $\tau_{AB}^{(i, 0)}$ is a density matrix. On the other hand, since $\lim_{n\rightarrow \infty} \tau_{AB}^{(i, \epsilon_n)} = \tau_{AB}^{(i, 0)}$ and that $\mu(\tau_{AB}^{(i, \epsilon_n)})\leq \epsilon_n$, using  Lemma~\ref{lem:mu-cont} we find that $\mu(\tau_{AB}^{(i, 0)})=0$. Then by part (a) of Theorem~\ref{thm:extreme}, $\tau_{AB}^{(i, 0)}$ is a product state for every $i$. As a result $\rho_{AB}$ is separable.  

\end{proof}

We showed in Theorem~\ref{thm:max-ent-data} that maximal entanglement is monotone under local (quantum) operations. 
As a measure of entanglement one may expect that maximal entanglement is also monotone under classical communication too. However, this is not the case. 

\begin{theorem}\label{thm:max-ent-locc} Maximal entanglement is not monotone under classical communication. 
\end{theorem}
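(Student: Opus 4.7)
The plan is to exhibit a bipartite state $\rho_{AB}$ and an LOCC protocol $\Lambda$ such that $\mu_\ent(\Lambda(\rho_{AB})) > \mu_\ent(\rho_{AB})$. A natural candidate for $\rho_{AB}$ is the noisy Bell state $\rho^{(\epsilon)}_{AB}$ from equation~\eqref{eq:epsilon-bell}. Since $\mu\big(\rho^{(\epsilon)}_{AB}\big) = 1-\epsilon$ as noted in the paper, property~(ii) immediately gives $\mu_\ent\big(\rho^{(\epsilon)}_{AB}\big) \leq 1-\epsilon$, and Theorem~\ref{thm:max-ent-tensorize} lifts this bound to every tensor power: $\mu_\ent\big((\rho^{(\epsilon)}_{AB})^{\otimes n}\big) \leq 1-\epsilon$ for all $n$.

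For the LOCC $\Lambda$, I would invoke the classical entanglement-distillation result: whenever the singlet fraction of $\rho^{(\epsilon)}_{AB}$ exceeds $1/2$ (which holds for $\epsilon<2/3$), there exists, for every $\delta>0$, an integer $n$ and an LOCC map $\Lambda_n$ such that the output $\sigma := \Lambda_n\big((\rho^{(\epsilon)}_{AB})^{\otimes n}\big)$ satisfies $\bra{\psi}\sigma\ket{\psi} \geq 1-\delta$, where $\ket{\psi}=\frac{1}{\sqrt{2}}(\ket{00}+\ket{11})$. This is the step where a literature result is cited rather than built from scratch, and it is the main obstacle to making the proof fully self-contained.

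The remaining task is to show that high singlet-fidelity of $\sigma$ forces $\mu_\ent(\sigma)$ close to $1$, and this is precisely where Lemma~\ref{lem:large} was tailored to help. Given any decomposition $\sigma = \sum_i p_i \tau_{AB}^{(i)}$, linearity of the fidelity gives $\sum_i p_i \bra{\psi}\tau_{AB}^{(i)}\ket{\psi} \geq 1-\delta$, so by averaging there exists an index $i^{*}$ with $\bra{\psi}\tau_{AB}^{(i^{*})}\ket{\psi} \geq 1-\delta$. Applying Lemma~\ref{lem:large} to $\tau_{AB}^{(i^{*})}$ yields $\mu\big(\tau_{AB}^{(i^{*})}\big) \geq 1-2\delta$, hence $\max_i \mu\big(\tau_{AB}^{(i)}\big) \geq 1-2\delta$ for every decomposition. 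Taking the infimum gives $\mu_\ent(\sigma) \geq 1-2\delta$.

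To conclude, I would fix $\epsilon \in (0,2/3)$ and $\delta$ so that $1-2\delta > 1-\epsilon$ (e.g.\ $\delta=\epsilon/4$), and then choose $n$ and $\Lambda_n$ as above. This produces an LOCC map for which $\mu_\ent(\Lambda_n(\rho^{(\epsilon)\otimes n}_{AB})) \geq 1-2\delta > 1-\epsilon \geq \mu_\ent(\rho^{(\epsilon)\otimes n}_{AB})$, contradicting any putative LOCC-monotonicity of $\mu_\ent$. The hard part, as noted, is the invocation of distillability; a cleaner one-shot counterexample seems difficult because single copies of $\rho^{(\epsilon)}_{AB}$ are invariant under $U\otimes U^{*}$ twirling, so local filtering alone cannot improve singlet fidelity and multiple copies are genuinely needed to separate the two sides of the inequality.
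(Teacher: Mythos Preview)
Your proposal is correct and follows essentially the same route as the paper: pick the isotropic state $\rho_{AB}^{(\epsilon)}$, bound $\mu_\ent$ of all its tensor powers by $\mu(\rho_{AB}^{(\epsilon)})=1-\epsilon$, invoke two-qubit distillability under LOCC, and observe that the output has $\mu_\ent$ close to~$1$. The only difference is cosmetic: the paper simply asserts $\mu_\ent(\ket\psi\bra\psi)=1$ and leaves the ``approximate Bell state'' issue implicit, whereas you handle it explicitly and cleanly via Lemma~\ref{lem:large}, which is in fact the more careful way to close the argument.
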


\begin{proof} We know that any two-qubit entangled state is distillable~\cite{HHH}. That is, having copies of an entangled two-qubit state $\rho_{AB}$, approximations of the maximally entangled state $\ket\psi_{AB} = \frac{1}{\sqrt 2}(\ket{00}+\ket{11})$ can be distilled using local quantum operations and classical communication (LOCC). However, we know that there are two-qubit entangled states $\rho_{AB}$ such that $\mu_{\ent}(\rho_{AB})=\mu_{\ent}(\rho_{AB}^{\otimes n})\leq \mu(\rho_{AB})<1=\mu_\ent(\ket\psi\bra \psi_{AB})$ (see for instance the example after Corollary~\ref{cor:1}). On the other hand, local quantum operations do not increase maximal entanglement. Therefore, it should be classical communication that (sometimes) increases maximal entanglement. 

\end{proof}

Computing maximal entanglement $\mu_\ent(\rho_{AB})$ seems a hard problem in general since we need to take an infimum over all decompositions of $\rho_{AB}$. Finding upper bounds on $\mu_{\ent}(\rho_{AB})$ however, is easy; We only need to pick a decomposition to find an upper bound. In particular we have $\mu_\ent(\rho_{AB})\leq \mu(\rho_{AB})$. In the following we present some ideas that may serve as a useful tool
for proving lower bounds on maximal entanglement.

In the previous section we observed that the maximal correlation of $\rho_{AB}^{(\epsilon)}$ defined by~\eqref{eq:epsilon-bell} is equal to $\mu\big(\rho_{AB}^{(\epsilon)}\big)=1-\epsilon$. Computing the maximal entanglement of these states however, does not seem easy. By the partial transpose test we know that $\rho_{AB}^{(\epsilon)}$ is separable if and only if $\epsilon\leq 2/3$ ~\cite{Peres, HHH2}. Then using the faithfulness of maximal entanglement (Theorem~\ref{thm:max-ent-faithful}), $\mu_\ent(\rho_{AB}^{(\epsilon)})=0$ if $\epsilon\geq 2/3$ and $\mu_\ent(\rho_{AB}^{(\epsilon)})>0$ otherwise. 

In the following we present a characterization of $\mu_\ent\big(\rho_{AB}^{(\epsilon)}\big)$ in terms another optimization problem.

\begin{theorem}\label{thm:max-ent-isotropic} Define
$$\lambda(\epsilon)= \min\{\mu(\sigma_{AB}): \bra \psi \sigma_{AB}\ket\psi \geq 1-3\epsilon/4\},$$
where $\ket \psi_{AB}= \frac{1}{\sqrt 2}(\ket{00}+\ket{11})$. Then $\mu_{\ent}\big(\rho_{AB}^{(\epsilon)}\big)=\lambda(\epsilon)$, where $\rho_{AB}^{(\epsilon)} = (1-\epsilon) \ket \psi\bra \psi_{AB} + \epsilon I_{AB}/4$. Moreover, $\lambda(\epsilon)=0$ for $\epsilon\geq 2/3$, and  for $\epsilon<2/3$ we have
$$1-3\epsilon/2\leq \lambda(\epsilon)\leq 1-\epsilon.$$
\end{theorem}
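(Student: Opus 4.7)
The plan is to treat the three pieces of the theorem separately: first the identity $\mu_{\ent}(\rho_{AB}^{(\epsilon)}) = \lambda(\epsilon)$, then the sandwich bounds on $\lambda(\epsilon)$, and finally the vanishing at $\epsilon \ge 2/3$. The pivotal observation is that $\langle\psi|\rho_{AB}^{(\epsilon)}|\psi\rangle = 1 - 3\epsilon/4$, exactly matching the feasibility threshold in the definition of $\lambda(\epsilon)$. For the direction $\mu_{\ent}(\rho_{AB}^{(\epsilon)}) \ge \lambda(\epsilon)$, I would take an arbitrary decomposition $\rho_{AB}^{(\epsilon)} = \sum_i p_i \tau_{AB}^{(i)}$ and note that the fidelities $F_i = \langle\psi|\tau_{AB}^{(i)}|\psi\rangle$ average to $1 - 3\epsilon/4$, so at least one $F_{i^*}$ meets the threshold; this $\tau_{AB}^{(i^*)}$ is then feasible in the definition of $\lambda(\epsilon)$, forcing $\mu(\tau_{AB}^{(i^*)}) \ge \lambda(\epsilon)$, and taking the maximum over $i$ followed by the infimum over decompositions gives the inequality.

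The reverse direction $\mu_{\ent}(\rho_{AB}^{(\epsilon)}) \le \lambda(\epsilon)$ is the core of the argument. I would fix a (near-)minimizer $\sigma^*$ in the definition of $\lambda(\epsilon)$, with $F^* := \langle\psi|\sigma^*|\psi\rangle \ge 1 - 3\epsilon/4$. The key structural fact is that the $U \otimes U^*$-twirl sends any two-qubit state to the isotropic state of the same fidelity with $\ket\psi$, since the commutant of $\{U \otimes U^*\}$ on $\mathcal{H}_A \otimes \mathcal{H}_B$ is spanned by $\ket\psi\bra\psi$ and $I - \ket\psi\bra\psi$. Averaging $(U \otimes U^*)\sigma^*(U \otimes U^*)^{\dagger}$ (over the Haar measure, or equivalently over a finite unitary $2$-design on $U(2)$) therefore yields $\rho_{AB}^{(\epsilon^*)}$ where $\epsilon^* = 4(1-F^*)/3 \le \epsilon$, expressed as a convex combination of local-unitary copies of $\sigma^*$, each with maximal correlation $\mu(\sigma^*) = \lambda(\epsilon)$ by Theorem~\ref{thm:ab}. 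To descend to $\rho_{AB}^{(\epsilon)}$ itself, I would mix with the product state $I/4$: using $\rho_A^{(\epsilon^*)} = \rho_B^{(\epsilon^*)} = I/2$, one checks that $\rho_{AB}^{(\epsilon)} = p^2\, \rho_{AB}^{(\epsilon^*)} + (1-p^2)\, I/4$ with $p^2 = (1-\epsilon)/(1-\epsilon^*) \in [0,1]$. The resulting decomposition of $\rho_{AB}^{(\epsilon)}$ has every component of maximal correlation at most $\lambda(\epsilon)$, and the quasi-convexity property (v) of $\mu_{\ent}$ then delivers the bound.

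For the auxiliary bounds, $\lambda(\epsilon) \le 1 - \epsilon$ follows by using $\sigma = \rho_{AB}^{(\epsilon)}$ itself as a feasible witness, since $\mu(\rho_{AB}^{(\epsilon)}) = 1 - \epsilon$. The lower bound $\lambda(\epsilon) \ge 1 - 3\epsilon/2$ for $\epsilon < 2/3$ is an immediate application of Lemma~\ref{lem:large} with parameter $3\epsilon/4$, since any feasible $\sigma$ satisfies $\langle\psi|\sigma|\psi\rangle \ge 1 - 3\epsilon/4$. For $\epsilon \ge 2/3$, the product state $\ket{00}\bra{00}$ has $|\langle\psi|00\rangle|^2 = 1/2 \ge 1 - 3\epsilon/4$ and $\mu = 0$, showing $\lambda(\epsilon) = 0$; this is consistent with $\mu_{\ent}(\rho_{AB}^{(\epsilon)}) = 0$, which in that regime follows from separability of $\rho_{AB}^{(\epsilon)}$ via Theorem~\ref{thm:max-ent-faithful}. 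The main obstacle is the $\le$ direction: a single optimal witness $\sigma^*$ is not a priori a component of any decomposition of $\rho_{AB}^{(\epsilon)}$. The resolution is to exploit the $U \otimes U^*$-symmetry of isotropic states to promote $\sigma^*$ to a local-unitary ensemble producing $\rho_{AB}^{(\epsilon^*)}$, and then to absorb the possible slack between $F^*$ and $1 - 3\epsilon/4$ by mixing with the separable state $I/4$.
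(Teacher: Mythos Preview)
Your proof is correct and follows essentially the same route as the paper: the averaging argument for $\mu_{\ent}\ge\lambda$, the $U\otimes U^*$ twirl of an optimal $\sigma^*$ to produce $\rho_{AB}^{(\epsilon^*)}$ followed by mixing with $I/4$ for $\mu_{\ent}\le\lambda$, Lemma~\ref{lem:large} for the lower bound, and $\mu(\rho_{AB}^{(\epsilon)})=1-\epsilon$ for the upper bound. The only (cosmetic) difference is that for $\epsilon\ge 2/3$ you exhibit the explicit product witness $\ket{00}\bra{00}$, whereas the paper deduces $\lambda(\epsilon)=0$ indirectly from separability of $\rho_{AB}^{(\epsilon)}$ and faithfulness of $\mu_{\ent}$.
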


\begin{proof}
For every decomposition $\rho_{AB}^{(\epsilon)}=\sum_i p_i \sigma_{AB}^{(i)}$, we have
\[\bra \psi \rho_{AB}^{(\epsilon)} \ket \psi = \sum_i p_i \bra \psi\sigma_{AB}^{(i)}\ket \psi = 1-3\epsilon/4.\]
Then there exists $i^*$ such that $\bra \psi \sigma_{AB}^{(i^*)}\ket \psi\geq 1-3\epsilon/4$. Therefore, by definition $\mu\big(\sigma_{AB}^{(i^*)}\big)\geq \lambda(\epsilon)$. This means that for any such decomposition we have $\max_i \mu(\sigma_{AB}^{(i)})\geq \lambda(\epsilon)$, and then $\mu_{\ent}(\rho_{AB}^{(\epsilon)})\geq \lambda(\epsilon)$.

For the other direction let $\sigma_{AB}$ be a state with $\bra \psi \sigma_{AB}\ket \psi \geq 1-3\epsilon/4$ and $\mu(\sigma_{AB})=\lambda(\epsilon)$. Let us define
\begin{align}\label{eq:iso-decomp}
\tau_{AB}:= \int (U\otimes U^*) \sigma_{AB} (U\otimes U^{*})^{\dagger} \text{d}U,
\end{align}
where $\text{d}U$ denotes the Haar measure on the unitary group. $\tau_{AB}$ is an \emph{isotropic} state~\cite{HHisotropic}, i.e., for every unitary $U$ we have $(U\otimes U^{*})\tau_{AB}(U\otimes U^{*})^{\dagger} = \tau_{AB}$. Any isotropic state is of the form
$$\tau_{AB} = (1-\delta)\ket \psi \bra \psi_{AB} + \delta I_{AB}/4 = \rho_{AB}^{(\delta)}.$$
On the other hand, since $U\otimes U^*\ket\psi_{AB}=\ket\psi_{AB}$ for every unitary $U$, we have 
$$1-3\delta/4=\bra \psi\tau_{AB}\ket \psi = \bra \psi \sigma_{AB}\ket \psi \geq 1-3\epsilon/4.$$
Therefore $\delta\leq \epsilon$.

Now observe that~\eqref{eq:iso-decomp} is already a decomposition of $\tau_{AB}=\rho_{AB}^{(\delta)}$. Moreover, maximal correlation does not change under local unitaries. Therefore, we have 
$$\mu_{\ent}\big(\rho_{AB}^{(\delta)}\big)\leq \mu(\sigma_{AB})=\lambda(\epsilon).$$
To finish the prove we only need to notice that $\delta\leq \epsilon$ and then $\mu_{\ent}(\rho_{AB}^{(\delta)})\geq \mu_\ent(\rho_{AB}^{(\epsilon)})$. The latter inequality holds because 
$$\rho_{AB}^{(\epsilon)} = \frac{1-\epsilon}{1-\delta}\rho_{AB}^{(\delta)} + \frac{1+\epsilon-\delta}{1-\delta}I/4,$$
and that maximal entanglement is quasi-convex. 

$\lambda(\epsilon)=0$ for $\epsilon\geq 2/3$ is implied by the partial transpose test as discussed above. 
The lower bound $\lambda(\epsilon)\geq 1-3\epsilon/2$ is an immediate consequence of Lemma~\ref{lem:large}. The upper bound $\lambda(\epsilon)\leq 1-\epsilon$ is because $\mu_\ent\big(\rho_{AB}^{(\epsilon)}\big)\leq \mu\big(\rho_{AB}^{(\epsilon)}\big)$.

\end{proof}

\section{Concluding remarks}

In this paper we defined a measure of entanglement $\mu_\ent(\cdot)$ for quantum states that is faithful, monotone under local quantum operations, and gives the same number when computed on tensor powers of a bipartite state. The latter property, in particular, implies that maximal entanglement is neither super-additive nor monogamous. 

As we saw in Theorem~\ref{thm:max-ent-locc} a measure of entanglement with the above properties is either not monotone under classical communication or achieves its maximum value on all distillable states.



\begin{thebibliography}{10}

\bibitem{Hirschfeld} H. O. Hirschfeld, A connection between correlation and contingency,
Proc. Cambridge Philosophical Soc. \textbf{31}, 520-524 (1935).

\bibitem{Gebelein} H. Gebelein, Das statistische problem der Korrelation als variations-und Eigenwertproblem und sein Zusammenhang mit der Ausgleichungsrechnung, Z. f\"ur angewandte Math. und Mech. \textbf{21}, 364--379 (1941).

\bibitem{Renyi1} A. R\'enyi, New version of the probabilistic generalization of the large
sieve, Acta Math. Hung. \textbf{10}, 217-226 (1959).

\bibitem{Renyi2} A. R\'enyi, On measures of dependence, Acta Math. Hung. \textbf{10}, 
441-451 (1959).


\bibitem{Witsenhausen} H. S. Witsenhausen, On sequences of pairs of dependent random variables, SIAM Journal on Applied Mathematics, 28: 100-113 (1975).


\bibitem{Kumar1} G. Kumar, Binary R\'enyi Correlation, 
(December 2012).

\bibitem{Kumar2} G. Kumar, On Sequences of Pairs of Dependent Random Variables, 
(December, 2012).

\bibitem{KamathAnantharam} S. Kamath and V. Anantharam,
Non-interactive Simulation of Joint Distributions: The Hirschfeld-Gebelein-R\'enyi Maximal Correlation and the Hypercontractivity Ribbon,
Proceedings of the 50th Annual Allerton Conference on Communications, Control and Computing (2012).



\bibitem{KangUlukus} W. Kang and S. Ulukus, A New Data Processing Inequality and Its Applications in Distributed Source and Channel Coding,  IEEE Transactions on Information Theory \textbf{57}, 56--69 (2011)

 


\bibitem{Anantharametal} V. Anantharam, A. Gohari, S. Kamath, and C. Nair, On Maximal Correlation, Hypercontractivity, and the Data Processing Inequality studied by Erkip and Cover, arXiv:1304.6133 (2013).


\bibitem{Polyanskiy} Y. Polyanskiy, Hypothesis testing via a comparator, Information Theory Proceedings (ISIT), 2012 IEEE International Symposium on, 2206-2210 (2012).




\bibitem{Beigi12} S. Beigi, A New Quantum Data Processing Inequality, J. Math. Phys. 54, 082202 (2013).


\bibitem{HHH} M. Horodecki, P. Horodecki, and R. Horodecki, Inseparable Two Spin-$\frac 1 2$ Density Matrices Can Be Distilled to a Singlet Form, Physical Review Letters 78, 574 (1997).

\bibitem{Peres} A. Peres, Separability Criterion for Density Matrices, Phys. Rev. Lett. 77, 1413-1415 (1996).

\bibitem{HHH2} M. Horodecki, P. Horodecki, R. Horodecki, Separability of Mixed States: Necessary and Sufficient Conditions, Physics Letters A 223, 1-8 (1996).

\bibitem{HHisotropic} M. Horodecki and P. Horodecki, Reduction criterion of separability and limits for a class of distillation protocols, Physical Review A 59, 4206 (1999).






\end{thebibliography}
\end{document}